\newcommand{\supp}{\operatorname{supp}}
\newcommand{\R}{{\mathbb R}}
\newcommand{\N}{{\mathbb N}}
\theoremstyle{plain}
\newtheorem{thm}{Theorem}[section]
\newtheorem{prop}[thm]{Proposition}
\newtheorem{lemma}[thm]{Lemma}
\theoremstyle{definition}
\newtheorem{assump}[thm]{Assumption}
\newtheorem*{acknowledgement}{Acknowledgement} 
\newenvironment{pf}{\par\medskip\noindent\textit{Proof}:\,}{\hspace*{\fill}\qed\medskip\par\noindent} 
\newenvironment{pf*}[1]{\par\medskip\noindent\textit{#1}\,:}{\hspace*{\fill}\qed\medskip\par\noindent}   
\newtheorem{remark}[thm]{Remark}
\numberwithin{equation}{section}
\title[Pseudorelativistic electron densities]{The electron densities
  of pseudorelativistic eigenfunctions are smooth away from the nuclei} 
\thanks{\copyright\ 2008 by the authors. This article may be
  reproduced in its entirety for non-commercial purposes.}
\author[S. Fournais and T. \O stergaard S\o rensen]{S\o ren Fournais
        and Thomas \O stergaard S\o rensen}
\address[S. Fournais]
        {Department of Mathematical Sciences, 
         University of Aarhus, 
         Ny Munkegade, Building 1530, 
         DK-8000 \AA rhus C, Denmark.} 
\email{fournais@imf.au.dk}           
\address[S. Fournais on leave from]
        {CNRS and Laboratoire de
         Math\'{e}matiques d'Orsay, 
         Univ Paris-Sud, 
         Orsay CEDEX, F-91405, France.} 
\address[T. \O stergaard S\o rensen]
        {Department of
         Mathematical Sciences,
         Aalborg University,
         Fredrik Bajers Vej 7G,
         DK-9220 Aalborg East,
         Denmark.}
\email{sorensen@math.aau.dk}
\begin{document}
\thispagestyle{empty}
\date{\today}
\begin{abstract} We consider a pseudorelativistic model of atoms and
  molecules, where the kinetic energy of the electrons is given by
  $\sqrt{p^2+m^2}-m$. In this model the eigenfunctions are generally
  not even bounded, however, we prove that the corresponding
  one-electron densities are smooth away from the nuclei.
\end{abstract}
\maketitle
\section{Introduction and results}
It was proved recently \cite{SecondWien,Taxco} that the one-electron
densities of atomic and molecular eigenstates are smooth away from the
nuclei (actually, real analyticity was proved in \cite{Analytic}). The
model studied was the non-relativistic Schr\"{o}dinger operator with
fixed nuclei. The proofs in \cite{SecondWien,Taxco} depend heavily on
special properties of the non-relativistic kinetic energy operator
$-\Delta$. However, the strategy of large parts of the proof is very
robust. In the present paper we generalise the result to the case of
so-called pseudorelativistic molecules. 

We consider an $N$-electron molecule with $L$ fixed nuclei. The
pseudo\-re\-la\-tivi\-stic Hamiltonian is (in units where
\(\hbar=c=1\)) given by 
\begin{align}
   \label{Hmol}
   {\mathbf H}_{N,L}(\mathbf R,\mathbf
   Z)=\sum_{j=1}^N
   \Big\{T(p_j)-\sum_{\ell=1}^L
   \frac{Z_{\ell} \alpha }{|x_j-R_{\ell}|}\,\Big\}
   +\sum_{1\le i<j\le N}\frac{\alpha}{|x_i-x_j|}\,,
\end{align}
where the kinetic energy $T(p_j)$ of the $j$'th electron is given by
the operator
$$
  T(p) = \sqrt{p^2+m^2}-m = \sqrt{-\Delta+m^2}-m\,,
$$
with $m\in [0,\infty)$ being the mass of the electron; \(\alpha\) is
the fine structure constant (in these units, \(\alpha=e^2\), with
\(e\) the unit charge). In \eqref{Hmol}, $\mathbf R=(R_1,R_2,\dots
,R_L)\in\mathbb R^{3L}$, \(R_{\ell}\neq R_{k}\) for \(k\neq \ell\),
denote the positions of the $L$ nuclei whose positive charges are
given by $\mathbf Z=(Z_1, Z_2,\dots,Z_L)$. The positions of the $N$
electrons are denoted by ${\bf x}= (x_1,x_2,\dots,x_N)\in\mathbb
R^{3N}$ where $x_j$ denotes the position of the $j$'th electron in
$\mathbb R^3$; \(\Delta_j\) is the Laplacian with respect to
\(x_j\). We write \(\nabla=(\nabla_1,\ldots,\nabla_N)\) for the
gradient operator in \(\R^{3N}\). In \eqref{Hmol} we have omitted the
nucleus-nucleus interaction, $\sum_{\ell<k}\frac{Z_{\ell}Z_k
  \alpha}{|R_{\ell}-R_k|}$, since this is just an additive constant. 

The natural space for studying the operator \({\mathbf
  H}_{N,L}(\mathbf R,\mathbf Z)\) is, in view of the Pauli Exclusion
Principle, the antisymmetric spinor space,\par\noindent
$\wedge_{j=1}^N L^2({\mathbb R}^3; {\mathbb C}^2)$, however, our
results will not depend on spin and we do therefore not impose this
antisymmetry condition. Instead we work on the space $L^2({\mathbb
  R}^{3N})$.

We will assume that $0 < Z_{\ell} \alpha < 2/\pi$ for all $\ell \in
\{1,\ldots,L\}$.\footnote{The experimental value of the fine structure
  constant is $\alpha \approx 1/137$. For this value of $\alpha$, 
  $2/(\pi \alpha) \approx 87$} In this case we get from
\cite[Proposition 2.2]{DauLi83} 
(see also \cite{He77} and \cite{We75} for the case of Hydrogen)
that the negative Coulomb potentials constitute a {\it small form
perturbation} of the (total) kinetic energy (i.e., it is relatively
form bounded with relative bound less than one). The electron-electron
interactions being positive, and relative form bounded too, we get
that the quadratic form
\begin{align}
  \label{eq:formDef}
  \mathfrak{q}(u,v)&:=\Big\langle u\ , \sum_{j=1}^N
   T(p_j)\,v \Big\rangle-\Big\langle u\ , \sum_{j=1}^N \sum_{\ell=1}^L
  \frac{Z_{\ell} \alpha }{|x_j-R_{\ell}|}\,v \Big\rangle  
  \\&\quad
  +\Big\langle u,\sum_{1\le i<j\le N}\frac{\alpha}{|x_i-x_j|}\,
  v\Big\rangle 
  \ , \quad u,v \in H^{1/2}(\R^{3N})\,,
  \nonumber
\end{align}
is closed and semi-bounded. Here, \(\langle\cdot,\cdot\rangle\) is the
scalar product in \(L^2(\R^{3N})\). Hence, we can define the operator
${\mathbf H} \equiv {\mathbf H}_{N,L}(\mathbf R,\mathbf Z)$ as the
corresponding (unique) self-adjoint operator. It satisfies 
\[ H^1(\R^{3N}) \subset {\mathcal{D}({\bf H})\subset
  H^{1/2}(\R^{3N})}\,,\] 
and
\begin{align}
  \label{eq:form-op}
  \mathfrak{q}(u,v)=\langle u,{\bf H}v\rangle\,, \quad v\in
  \mathcal{D}({\bf H})\,,\quad u\in H^{1/2}(\R^{3N})\,.
\end{align}
Here, \(\mathcal{D}(\bf H)\) denotes the operator domain of \({\bf
  H}\); we denote its form domain by \(\mathcal{Q}(\bf H)\). 
All this follows from (the statements and proofs of)
\cite[Theorem~X.17]{R&S2} and \cite[Theorem~VIII.15]{R&S1}.
See \cite{LiYau88} for further references on 
${\mathbf H}_{N,L}(\mathbf R,\mathbf Z)$.

Suppose $\psi \in L^2({\mathbb R}^{3N})$ is an eigenfunction of
${\mathbf H}$, i.e., there exists $E \in {\mathbb R}$ such that 
$$
  {\mathbf H} \psi = E\psi\,.
$$
We define the {\it one-electron density}
$\rho\in L^1(\R^3)$ (associated to $\psi$) by 
\begin{align}\label{def:rho}
  \rho(x)&=\sum_{j=1}^N \rho_j(x)\nonumber
  \\&=\sum_{j=1}^N\int_{\mathbb
    R^{3N}}|\psi(x_1,\ldots,x_N)|^2\,\delta(x-x_j)\,dx_1\cdots dx_N\,.
\end{align}

The main result of this paper is the following.
\begin{thm}
\label{thm:density_smooth}
Let \(\psi \in L^2({\mathbb R}^{3N}) \) be an eigenfunction of
\({\mathbf H}\). Let the associated density \(\rho\) be as defined in  
\eqref{def:rho}.

Then
\begin{align}
  \label{eq:smooth}
  \rho&\in C^{\infty}\big({\mathbb R}^{3}\setminus
  \{R_1,\ldots, R_L\}\big)\,.
\end{align}
\end{thm}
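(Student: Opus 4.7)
The plan is to adapt the strategy of \cite{SecondWien,Taxco}: localize around a regular point $x_0$, derive an equation for the localized eigenfunction, and transfer the partial regularity that can be bootstrapped from it to the one-variable marginal of $|\psi|^2$. The key new feature is that $T(p_j)=\sqrt{-\Delta_j+m^2}-m$ is a non-local pseudodifferential operator, so commutators with cutoffs and with Coulomb singularities must be handled by pseudodifferential techniques rather than the Leibniz rule.

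Fix $x_0\in\R^3\setminus\{R_1,\ldots,R_L\}$, choose $r>0$ with $B(x_0,2r)\cap\{R_1,\ldots,R_L\}=\emptyset$, and select $\chi\in C_c^\infty(\R^3)$ equal to $1$ on $B(x_0,r)$ and supported in $B(x_0,2r)$. For each $j\in\{1,\ldots,N\}$ set $u_j:=\chi(x_j)\psi$. Since $\chi(x_j)$ commutes with every Coulomb term in $\mathbf{H}$ (these are multiplication operators) and with $T(p_i)$ for $i\neq j$, the eigenvalue equation yields
\begin{align*}
  (\mathbf{H}-E)\,u_j \;=\; [\,T(p_j),\,\chi(x_j)\,]\,\psi.
\end{align*}
For $x\in B(x_0,r)$ one has $\rho_j(x)=\|u_j(\cdot,x,\cdot)\|^2_{L^2(\R^{3(N-1)})}$, so it suffices to show that the map
$
  x_j\longmapsto u_j(\cdot,x_j,\cdot)
$
belongs to $C^\infty\bigl(\R^3;\,L^2(\R^{3(N-1)})\bigr)$; summing the resulting smoothness of each $\rho_j$ then gives \eqref{eq:smooth}.

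To establish this partial regularity, I would rewrite the equation as
\begin{align*}
  T(p_j)\,u_j \;=\; \Big(E-\sum_{i\neq j}T(p_i)-V\Big)\,u_j \;+\; [\,T(p_j),\,\chi(x_j)\,]\,\psi,
\end{align*}
with $V$ the total Coulomb interaction, viewing both sides as $L^2$-valued functions of $x_j$. Since $T(p_j)$ is a first-order elliptic pseudodifferential operator in $x_j$ and $[T(p_j),\chi(x_j)]$ is of order zero, elliptic regularity provides an initial gain in the $x_j$-direction. One then bootstraps by applying $(1-\Delta_{x_j})^{s/2}$ and commuting past the right-hand side: the nuclear attractions $1/|x_j-R_\ell|$ are smooth on $\supp\chi$ and harmless, terms not depending on $x_j$ commute with $|p_j|$, and only the electron-electron singularities $1/|x_i-x_j|$ require genuine work.

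The main obstacle is controlling the iterated commutators of $T(p_j)$ with $1/|x_i-x_j|$ as operators between appropriate Sobolev scales in the remaining variables. The first commutator is tamed by Kato's inequality $1/|y|\le(\pi/2)\sqrt{-\Delta_y}$ on $H^{1/2}$-type spaces; higher commutators can be extracted from the Balakrishnan representation
\begin{align*}
  \sqrt{-\Delta_j+m^2}\;=\;\frac{1}{\pi}\int_0^{\infty}\frac{-\Delta_j+m^2}{\lambda+(-\Delta_j+m^2)}\,\frac{d\lambda}{\sqrt{\lambda}},
\end{align*}
reducing them to commutator estimates for the resolvents of the local operator $-\Delta_j$. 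Once the bootstrap yields $u_j\in H^s_{x_j}\bigl(L^2(\R^{3(N-1)})\bigr)$ for every $s\in\N$, Sobolev embedding provides the desired $C^\infty(\R^3;L^2)$-smoothness of $x_j\mapsto u_j(\cdot,x_j,\cdot)$, and, since $x_0$ was arbitrary in $\R^3\setminus\{R_1,\ldots,R_L\}$, the theorem follows.
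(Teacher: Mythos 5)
Your localization and the reduction ``$x_j\mapsto u_j(\cdot,x_j,\cdot)\in C^\infty(\R^3;L^2)$ implies $\rho_j$ smooth'' are fine, and you correctly single out the electron-electron terms as the obstruction, but the bootstrap you propose cannot close. Differentiating $1/|x_i-x_j|$ in the variable $x_j$ makes the diagonal singularity \emph{worse}: commuting $(1-\Delta_{x_j})^{s/2}$ past $W=1/|x_i-x_j|$ produces coefficients $\nabla_{x_j}W=O(|x_i-x_j|^{-2})$, then $O(|x_i-x_j|^{-3})$, and so on. Absorbing $|x_i-x_j|^{-k}u_j$ into $L^2$ via Hardy's inequality would require $u_j\in H^{k}$ in the relative coordinate for every $k$, which is precisely the regularity you are trying to establish. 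Kato's inequality controls one power, not an inductive ladder, and the Balakrishnan representation merely re-expresses $T(p_j)$: the resolvent commutators still bring out $\nabla W$ and the same degeneration recurs. (This obstruction is already present for $-\Delta_j$; the non-locality of $T(p_j)$ is a secondary complication, not the essential one.) There is also a preliminary problem: since the Coulomb terms are only small \emph{form} perturbations of $\mathbf{T}$, the operator $\mathbf{H}$ is a form sum with $\mathcal{D}(\mathbf H)$ pinned only between $H^1(\R^{3N})$ and $H^{1/2}(\R^{3N})$, so moving $\sum_{i\neq j}T(p_i)+V$ to the right-hand side does not manifestly give something in $L^2$; this is exactly why the paper introduces the auxiliary operator $\widetilde{\mathbf H}=\widehat{\mathbf H}+I_P$ with a genuine operator domain rather than using the naive split.

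The idea that closes the argument, and which is missing from your proposal, is \emph{parallel differentiation}. Instead of differentiating in $x_j$ alone, one differentiates in the cluster center-of-mass direction $x_P=\frac{1}{\sqrt{|P|}}\sum_{i\in P}x_i$ for a subset $P\ni j$: this direction annihilates the intra-cluster singularities, since $\partial_{x_P}\big(1/|x_i-x_k|\big)=0$ for $i,k\in P$, so repeated differentiation produces no new singular coefficients. The only terms generated on the right-hand side are the bounded $I_P$ and the commutator $[{\mathbf T},\varphi_{P,\epsilon}]$, and these are controlled by Lemma~\ref{lem:PSDO} and the inductive hypothesis. Proposition~\ref{lem:diff_parallel} thus yields $\partial_{x_P}^\gamma\psi\in L^2(U_P(\epsilon))$ for all $\gamma$ and all partitions $P,Q$; smoothness of $\rho$ then follows by the geometric decomposition of the integration domain from \cite{Taxco}, which splits according to which electrons are near $x_1$ and, on each piece, changes variables so that $\partial_{x_1}$ becomes a $\partial_{x_P}$-derivative plus derivatives falling only on cutoff functions. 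A direct $x_j$-bootstrap, however one packages the pseudodifferential machinery, does not reach the theorem.
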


\begin{remark}\label{rem:1}
  \(\, \)
\begin{enumerate}
\item[\rm (i)]
Theorem~\ref{thm:density_smooth} will follow from the more
general abstract Theorem \ref{thm:abstract} below.
\item[\rm (ii)]
We state Theorem~\ref{thm:density_smooth} for Coulomb
interactions, but it holds for more general potentials.
For instance, one can use the Yukawa potential $\frac{e^{-c
|x|}}{|x|}$, with $c>0$, in one or all of the two-particle
interactions. See Theorem~\ref{thm:abstract} below for a more general
statement of the result.
\item[\rm (iii)]
Since we are only interested in regularity
properties of $\rho$, we can study each of the
(finitely many) terms in \eqref{def:rho} separately. 
We will restrict ourselves to proving the statement in
\eqref{eq:smooth} for  
\begin{align}
  \label{eq:rho}
  \rho_1(x) := \int_{\R^{3N-3}} |\psi(x,x_2,\ldots, x_N)|^2 \, dx_2\cdots
  dx_N\,,
\end{align}
the proof for the other terms being analogous.
Furthermore, to simplify the presentation, we limit ourselves to the
atomic case (\(L=1, R_1=0, Z_1=Z, 0<Z\alpha<2/\pi\)). 
\end{enumerate}
\end{remark}

\noindent{\bf Notation.}
We denote by ${\mathcal B}^{\infty}(U)$ the smooth functions with
bounded derivatives on the open set $U$, i.e., 
$$
  {\mathcal B}^{\infty}(U) = \big\{ u \in C^{\infty}(U) \,\big|
  \,\partial^\alpha u \in L^{\infty}(U)\text{ for all } \alpha\,\big\}\,. 
$$

\section{The abstract theorem}
Our main interest in this paper is the regularity of one-electron
densities of pseudorelativistic atoms and molecules with Coulomb
interactions, as stated in Theorem~\ref{thm:density_smooth}. However,
our result holds in a more general case, which we will state
here. 

It is known that, in the case of relativistic atoms, the potential
energy is not a small {\it operator} perturbation of the kinetic
energy, if the values of $\alpha, N$, and $Z$ become too large. (This
is also the case in other relativistic models than the one studied
here.) In this case, as discussed in the introduction,  the
Hamiltonian is only defined as the (unique) self-adjoint operator
associated to a semi-bounded closed quadratic form. On the other hand,
the pseudorelativistic kinetic energy has an extra, important
property: It is the generator of a positivity preserving semigroup. 

Our abstract conditions below are thus based on the kinetic energy 
\({\bf T}\) below being the generator of a positivity preserving
semigroup. This fact follows from the explicit formula for the
integral kernel of the semigroup generated by \(T(p)\); see
e.g. \cite[7.11(11)]{Li-Loss}.  

The Hamiltonians considered will be of the form
\begin{align}
  {\mathbf H} = {\mathbf T} + {\bf V} \,,
\end{align}
where (with \({\bf p}=(p_1,\ldots,p_N)\in\R^{3N}\))
\begin{align}
  \label{def:T}
  {\mathbf T} &= {\bf T}({\bf p})=\sum_{j=1}^N T(p_j)
  = \sum_{j=1}^N \sqrt{-\Delta_j+m^2}-m\,,\\
  {\bf V} &= {\bf V}({\mathbf x}) = \sum_{j=1}^{N} V_j(x_j) +
  \sum_{1\leq j< k\leq N} 
  W_{j,k}(x_j-x_k)\,.
\end{align}

The following are the assumptions on the potential \({\bf V}\). 
\begin{assump}
\label{as:form}
\begin{itemize}
\item[(i)] 
\begin{itemize}
\item
For all \(j\in\{1,\ldots,N\}\), 
$$
  V_j\in C^{\infty}\big({\mathbb R}^3\setminus \{0\}\big)\cap
  {\mathcal B}^{\infty}\big({\mathbb R}^3\setminus
  B(0,1)\big)\,.
$$
\item
For all $Q \subset \{1,\ldots, N\}$, the quadratic form on
  $\otimes_{j\in Q} L^2({\mathbb R}^3)$  given by the multiplication
  operator  
$
  {\bf V}_Q:= \sum_{j\in Q} V_j(x_j)
$
is a small form perturbation of 
$
  {\mathbf T}_{Q}:=\sum_{j\in Q} |p_j|\,.
$
\end{itemize}
\item[(ii)] 
For all \(j,k\in\{1,\ldots,N\}\) with \(j\neq k\), 
 \begin{itemize}
 \item $W_{j,k} \geq 0$ pointwise and \(W_{j,k}(x)=W_{k,j}(-x)\).
 \item $W_{j,k} \in C^{\infty}\big({\mathbb R}^3\setminus \{0\}\big)\cap
   {\mathcal B}^{\infty}\big({\mathbb R}^3\setminus
   B(0,1)\big)\,.
 $
 \item Multiplication by $W_{j,k}$ defines a bounded operator from
   $H^1({\mathbb R}^3)$ to $L^2({\mathbb R}^3)$ (by
   interpolation boundedness from \par\indent$H^{1/2}(\R^3)$ to
   $H^{-1/2}(\R^3)$ therefore follows). 
 \end{itemize}
\end{itemize}
\end{assump}
Under the above assumptions it is clear that ${\mathbf H}={\mathbf T}
+ {\bf V}$ is well defined as the (unique) self-adjoint operator of
the the corresponding closed and semi-bounded quadratic form (see the
introduction for details).

The main abstract result of this paper is the following.
\begin{thm}
\label{thm:abstract}
Let $m\geq 0$ and let ${\mathbf T}$ be the (total) pseudorelativistic
kinetic energy operator 
\begin{align}
  {\mathbf T} = \sum_{j=1}^N \sqrt{-\Delta_j+m^2}-m\,.
\end{align}
Let functions 
\begin{align}
  V_j &: {\mathbb R}^3 \rightarrow {\mathbb R} ,\quad j \in \{1, \ldots,
  N\},\nonumber\\ 
  W_{j,k} &: {\mathbb R}^3 \rightarrow {\mathbb R} ,\quad j,k \in \{1,
  \ldots, N\}, j\neq k\,, \nonumber 
\end{align}
be given such that Assumption~\ref{as:form} is satisfied, and let
\begin{align*}
  {\bf V}({\mathbf x}) = \sum_{j=1}^{N} V_j(x_j) + \sum_{1\leq j< k\leq N}
  W_{j,k}(x_j-x_k)\,.
\end{align*}
Let ${\mathbf H} = {\mathbf T} + {\bf V}$ be the self-adjoint operator
associated to the corresponding quadratic form (closed on
$H^{1/2}({\mathbb R}^{3N})$).
Let finally $\psi\in L^2({\mathbb R}^{3N})$ be an eigenfunction of
${\mathbf H}$ and let $\rho$ be the associated
density as defined in \eqref{def:rho}. 

Then
\begin{align}
  \rho \in C^{\infty}\big({\mathbb R}^3 \setminus \{0\}\big)\,.
\end{align}
\end{thm}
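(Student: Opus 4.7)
The plan is to fix a point $y_0\in\R^3\setminus\{0\}$ and show that $\rho_1$ is smooth on a neighbourhood of $y_0$ by deriving a localized elliptic pseudodifferential equation for $\rho_1$ and bootstrapping Sobolev regularity. The key tool is that $T(p_1)=\sqrt{-\Delta_1+m^2}-m$ is elliptic of order one, so hypoellipticity gains one Sobolev derivative per iteration; the task is to ensure the right-hand side is correspondingly regular near $y_0$.

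First, one shows a priori that $\psi\in C^\infty(\Omega)$ on the open set
\[
  \Omega=\{\mathbf{x}\in\R^{3N}: x_j\neq 0\text{ and }x_j\neq x_k\text{ for all }j\neq k\}\,.
\]
On $\Omega$ the potential $\mathbf{V}$ is smooth by Assumption~\ref{as:form}, so $\mathbf{H}\psi=E\psi$ is a first-order elliptic pseudodifferential equation with smooth coefficients, and a standard parametrix (or iterated commutator) argument gives the claim. Next, choose $\phi\in C_c^\infty(\R^3\setminus\{0\})$ with $\phi\equiv 1$ near $y_0$ and try to derive an equation for $\phi\rho_1$. In the non-relativistic case, the Leibniz identity $\Delta(|\psi|^2)=2\operatorname{Re}(\bar\psi\Delta\psi)+2|\nabla\psi|^2$ delivers this at once; here one must instead use the subordination representation of $T(p_1)$ via the heat semigroup of $-\Delta_1+m^2$, together with the Kato-type pointwise inequality for $T(p_1)$ that follows from the positivity preservation of the semigroup generated by $\mathbf{T}$. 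Multiplying the eigenvalue equation by $\phi(x_1)\bar\psi$ (and its complex conjugate), adding, and integrating in $(x_2,\ldots,x_N)$, this produces an equation $T(p_1)(\phi\rho_1)=F$ modulo a non-negative correction, with $F$ composed of the commutator term $\int\bar\psi\,[T(p_1),\phi]\psi\,d\mathbf{x}'$, the smooth factor $\phi V_1\rho_1$ (smooth on $\supp\phi$ since $V_1\in C^\infty(\R^3\setminus\{0\})$), and integrals of $|\psi|^2$ against the weights $V_j, W_{j,k}$ for $j,k\geq 2$ and against $W_{1,k}(x_1-x_k)$.

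The main obstacle is the non-locality of $T(p_1)$: the commutator $[T(p_1),\phi]$ does not vanish on $\{\phi\equiv 1\}$, as it would for a differential operator. However, it is a pseudodifferential operator of order zero with smooth kernel off the diagonal (pseudolocal), so its contribution to $F$ is smooth in $x_1$ near $y_0$ by the a priori regularity of $\psi$ on $\Omega$, together with integrability of $|\psi|^2$ in the remaining variables guaranteed by $\psi\in\mathcal{Q}(\mathbf{H})\subset H^{1/2}(\R^{3N})$ and the form-boundedness parts of Assumption~\ref{as:form}. The remaining integrals are finite for the same reason and depend smoothly on $x_1\in\supp\phi$ because, on that support, the weights $V_j(x_j)$, $W_{j,k}(x_j-x_k)$ for $j,k\geq 2$ are independent of $x_1$ (and smooth away from their singular sets), while $W_{1,k}(x_1-x_k)$ is smooth in $x_1$ except on the diagonal $x_1=x_k$, which after integration in $x_k$ contributes a smooth function of $x_1$ by the $H^{1/2}\to H^{-1/2}$ bound on $W_{1,k}$ in Assumption~\ref{as:form}. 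Once the right-hand side $F$ is seen to lie in $H^s_{\mathrm{loc}}$ near $y_0$ for every $s$, ellipticity of $T(p_1)$ upgrades $\phi\rho_1$ to $H^{s+1}_{\mathrm{loc}}$; iterating yields $\rho_1\in C^\infty$ in a neighbourhood of $y_0$, which is the theorem.
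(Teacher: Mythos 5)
Your proposal takes a genuinely different route from the paper's, and it has a serious gap at its core.

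The paper does not derive an equation for $\rho_1$ at all. Its strategy (inherited from \cite{SecondWien,Taxco}) is to prove Proposition~\ref{lem:diff_parallel}: for every partition $P,Q$, the eigenfunction satisfies $\partial_{x_P}^{\gamma}\psi \in L^2(U_P(\epsilon))$ for all $\gamma$. Smoothness of $\rho$ then follows by a change of variables (of the form $y_1=x_1$, $y_j=x_j-x_1$ for $j\in P$) and differentiation under the integral sign, after decomposing $\R^{3N-3}$ into regions labelled by the various partitions $(P,Q)$; this is the argument of \cite[Section~3]{Taxco}. The entire technical content of the paper is the proof of that proposition, which has to cope with the facts that $\mathcal{D}(\mathbf{H})$ is only sandwiched between $H^1$ and $H^{1/2}$, and that $\mathbf{T}$ is non-local.

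Your plan instead tries to produce a closed pseudodifferential equation $T(p_1)(\phi\rho_1)=F$ and bootstrap. This step does not go through. Unlike the Laplacian, for which $\Delta|\psi|^2 = 2\operatorname{Re}(\bar\psi\Delta\psi)+2|\nabla\psi|^2$ gives a genuine identity (with a remainder whose sign and structure one can exploit), $T(p_1)=\sqrt{-\Delta_1+m^2}-m$ has no Leibniz rule. What the positivity of the semigroup gives is a Kato \emph{inequality}, e.g.\ $T(p_1)|u|\leq\operatorname{Re}(\overline{\operatorname{sgn}u}\,T(p_1)u)$ in the distributional sense; after multiplying by $\phi\bar\psi$ and integrating over $x_2,\ldots,x_N$ one obtains at best an inequality with a non-negative, non-explicit defect term. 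That defect plays the role of $2\int|\nabla\psi|^2\,dx'$ in the non-relativistic formula, whose regularity is exactly as hard to establish as that of $\rho_1$ itself; no bootstrap can start from it. Moreover, even if one had such an identity, the assertion that $F$ is smooth near $y_0$ is not justified: the contributions $x_1\mapsto\int W_{1,k}(x_1-x_k)|\psi|^2\,dx'$ and $x_1\mapsto\int V_j(x_j)|\psi|^2\,dx'$ integrate $|\psi|^2$ straight through the singularities of $W_{1,k}$, resp.\ $V_j$, and the $H^{1/2}\to H^{-1/2}$ boundedness you invoke is a global quantitative bound, not a local regularity statement. The a priori smoothness $\psi\in C^\infty(\Omega)$ does not help either, because to differentiate these integrals repeatedly in $x_1$ you need the transverse $L^2$-integrability of $\partial_{x_1}^{\gamma}\psi$ on a full-measure set, not just pointwise smoothness on $\Omega$; that transverse integrability is precisely what the parallel differentiation in Proposition~\ref{lem:diff_parallel} supplies and what your sketch assumes implicitly. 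In short, the step you treat as routine is where the paper's real work lives, and the route you chose around it (a direct equation for $\rho_1$) does not exist for this kinetic energy.
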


\begin{remark}
As pointed out in Remark~\ref{rem:1} (i),
Theorem~\ref{thm:density_smooth} follows from Theorem
\ref{thm:abstract}. 
\end{remark}

\begin{proof}[Proof of Theorem~\ref{thm:abstract}]$\,$\\
The smoothness of $\rho$ is a direct consequence of
Propostion~\ref{lem:diff_parallel} below. The argument is exactly the
same as the one given in \cite[Section 3]{Taxco} in the proof of
\cite[Theorem 1.1]{Taxco}. We therefore omit the details. 
\end{proof}
All that remains is to (state and) prove
Proposition~\ref{lem:diff_parallel} below. 

\section{The parallel differentiation}
\label{parallel}
The fact that one is allowed to differentiate the eigenfunction
\(\psi\) parallel to the singularities of the (total) potential \(V\)
is the key ingredient in proving the smoothness of the density
\(\rho\). This approach was carried out for the non-relativistic
Schr\"{o}dinger operator---that is, with \(T(p_j)=-\Delta_j\) in
\eqref{Hmol}---in \cite[Proposition 1]{SecondWien} (see also
\cite{Taxco}). We sketch the main ideas before giving the exact
statement of the result (Proposition~\ref{lem:diff_parallel} below)
and its proof. 

Let $u \in L^2(\R^d)$ and ${\mathcal V}\in L^{\infty}(\R^d)$, and
assume that  
\begin{align}\label{eq:EQN}
  \Delta u = {\mathcal V}u\,.
\end{align}
Then \eqref{eq:EQN} implies that $u \in H^2(\R^d)$, in
particular, $\partial u \in L^2(\R^d)$ for any derivative \(\partial\).
Assume furthermore that, for some specific directional derivative
$\partial_{a} = \sum_j a_j \partial_j$, $a_j \in \R$, we have
$\partial_{a} {\mathcal V} \in L^{\infty}(\R^d)$. As just argued,
\(\partial_{a}u\in L^2(\R^d)\). Then, by
differentiation of \eqref{eq:EQN}, we find that
\begin{align}\label{eq:EQN2}
  \Delta(\partial_{a}u)={\mathcal V}\partial_{a} u
  +(\partial_{a}{\mathcal V}) u\,, 
\end{align}
from which it follows that, in fact, $\partial_{a} u \in H^2(\R^d)$. 
Moreover, the above argument is easily localised: If $\partial_{a}
{\mathcal V} \in L^{\infty}(U)$ for some open set $U \subset \R^d$,
then we can conclude that $\partial_{a} u \in H^2(U)$. 

Using this idea (and an induction argument) on the eigenvalue
equation one finds that eigenfunctions of the non-relativistic
molecular Hamiltonian are smooth in certain directions and on certain
open sets (see Proposition~\ref{lem:diff_parallel} for a precision of
the geometry, which is the same as in the non-relativistic case). In
the molecular case the (Coulomb) potential is not a bounded function,
but one easily sees that the argument carries over to the case of
potentials \(\mathcal{V}\) which are a small {\it operator}
perturbation of the kinetic energy.  

For the pseudorelativistic operator in \eqref{Hmol} this procedure
does not work immediately, since we cannot separate the kinetic and
potential energies: Since the potential ${\bf V}$ is only a small
quadratic form perturbation of the kinetic energy \({\bf T}\), the
operator \({\bf H}={\bf T}+{\bf V}\) is only given as a form sum. 

The idea is then to move the term \(\mathcal{V}u\) to the left hand
side in \eqref{eq:EQN} to find the following substitute for the
argument above. Let the operator ${\mathfrak H}$ be self-adjoint with
operator domain (contained in) $H^s(\R^d)$, for some $s\geq
1$. Suppose $u\in L^2(\R^d)$ satisfies (in the weak sense) the
equation 
\begin{align}\label{eq:Stupid2}
  {\mathfrak H} u = v \in L^2(\R^{d})\,.
\end{align}
It follows that $u \in {\mathcal D}({\mathfrak H}) \subset
H^s(\R^d)$. If furthermore $v \in H^1(\R^d)$ one can then take a
derivative in \eqref{eq:Stupid2} and use arguments as above to
conclude that \(\partial u\in{\mathcal D}({\mathfrak H}) \subset
H^s(\R^d)$.

However, in our case it is not easy to identify the operator domain of
${\mathbf H}$. By the definition as a form sum, we only get that
$H^{1}(\R^{3N}) \subset  {\mathcal D}({\mathbf H}) \subset
H^{1/2}(\R^{3N})$. That is, we cannot take one derivative on something
in \(\mathcal{D}(\mathbf{H})\) as explained above and still be sure to
obtain a function in $L^2(\R^{3N})$. Furthermore, the relativistic
kinetic energy is not {\it local}, so introduction of cut-off
functions in the induction argument becomes somewhat more complicated.  

Nevertheless, the above idea of a proof and therefore the main
technical step in \cite{Taxco}---Proposition~\ref{lem:diff_parallel}
below---can still be justified. That is, the strategy of repeatedly
differentiating an equation of the form \eqref{eq:Stupid2} in `good'
directions remains: 
We partially identify the operator domain ${\mathcal D}({\mathbf H})$
in order to be able to take one {\it parallel} derivative
(\(\partial_{x_P}\) below) on functions therein.
\begin{prop}
\label{lem:diff_parallel}
Let $P,Q$ be a partition of $\{1,\ldots,N\}$ satisfying
$$
  P \neq \emptyset, \,\,\,\,\,\,\,\,\,\,\,\,
  P\cap Q = \emptyset, \,\,\,\,\,\,\,\,\,\,\,\,
  P \cup Q =\{1,\ldots,N\}\,.
$$
Define, for $P,Q$ as above and $\epsilon > 0$,
\begin{align}\label{eq:def_U_P}
  U_P(\epsilon) = \big\{ (x_1,\ldots,x_N) \in {\mathbb
  R}^{3N}\,\big|&\, |x_j| >
  \epsilon \mbox{ for } j \in P, \nonumber \\
  & |x_j-x_k| > \epsilon \mbox{ for }
  j \in P, k \in Q \big\}\,.
\end{align}
Define also
\begin{align}\label{def:xP}
  x_P = \frac{1}{\sqrt{|P|}} \sum_{j\in P} x_j\quad (\in\R^3)\,.
\end{align}
Let furthermore \({\bf H}\) be as in Theorem~\ref{thm:abstract}, and
let $\psi \in L^2({\mathbb R}^{3N})$ be an eigenfunction of 
${\mathbf H}$, i.e., there exists $E \in {\mathbb R}$ such that
$$
  {\mathbf H} \psi = E \psi\,.
$$

Then
$$
  \partial^{\gamma}_{x_P} \psi \in L^2(U_P(\epsilon))
  \text{ for all }
  \gamma \in {\mathbb N}^3\,.
$$
\end{prop}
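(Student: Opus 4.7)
The plan is to proceed by induction on $n := |\gamma|$, with base case $n=0$ given by $\psi \in L^2(\R^{3N})$. The algebraic key is to decompose the potential as $\mathbf{V} = \mathbf{V}_{\parallel} + \mathbf{V}_{\perp}$, where $\mathbf{V}_{\perp}$ collects all terms that are independent of $x_P$, namely $V_k(x_k)$ for $k \in Q$ together with $W_{j,k}(x_j - x_k)$ whenever either both $j,k \in Q$ or both $j,k \in P$ (noting that a common shift of $\{x_j\}_{j\in P}$ in the $x_P$-direction leaves $x_j-x_k$ unchanged when $j,k\in P$), while $\mathbf{V}_{\parallel}$ collects the remaining terms $V_j(x_j)$ for $j \in P$ and $W_{j,k}(x_j - x_k)$ with exactly one index in $P$. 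By Assumption~\ref{as:form} and the definition of $U_P(\epsilon)$, $\mathbf{V}_{\parallel} \in C^\infty(U_P(\epsilon))$ with all derivatives bounded on compact subsets of $U_P(\epsilon)$ and uniformly bounded on $U_P(\epsilon) \setminus B(0,R)$ for $R$ large. Since $\mathbf{T}$ is translation-invariant and $\mathbf{V}_{\perp}$ does not depend on $x_P$, both commute with the $x_P$-translations $\tau_h$ and, in particular, with $\partial_{x_P}$.

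Formally differentiating $\mathbf{H}\psi = E\psi$ gives
\begin{equation*}
  \mathbf{H}\,\partial_{x_P}^{\gamma}\psi \;=\; E\,\partial_{x_P}^{\gamma}\psi \;-\; \sum_{\substack{\alpha\le\gamma\\\alpha\neq\gamma}} \binom{\gamma}{\alpha}\,\bigl(\partial_{x_P}^{\gamma-\alpha}\mathbf{V}_{\parallel}\bigr)\,\partial_{x_P}^{\alpha}\psi,
\end{equation*}
whose right-hand side, on $U_P(\epsilon)$ and under the inductive hypothesis, lies in $L^2_{\mathrm{loc}}(U_P(\epsilon))$. To make this rigorous and extract an $L^2$-bound on $\partial_{x_P}^{\gamma}\psi$, I would rely on Nirenberg's difference-quotient method in the $x_P$-direction. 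Subtracting the eigenvalue equation from its $\tau_h$-translated version (and using $\tau_h\mathbf{H}\tau_{-h} = \mathbf{T} + \mathbf{V}_{\perp} + \tau_h\mathbf{V}_{\parallel} =: \mathbf{H}^{(h)}$) gives
\begin{equation*}
  (\mathbf{H}^{(h)} - E)(\tau_h\psi - \psi) \;=\; -\bigl(\tau_h\mathbf{V}_{\parallel} - \mathbf{V}_{\parallel}\bigr)\,\psi.
\end{equation*}
For a cutoff $\chi \in C_c^{\infty}(U_P(\epsilon'))$ with $\epsilon' > \epsilon$, the mean-value theorem on $\supp \chi$ gives $\|\chi(\tau_h\mathbf{V}_{\parallel} - \mathbf{V}_{\parallel})\|_{L^{\infty}} = O(|h|)$; pairing the above equation in the form sense against $\chi D_h\psi := \chi(\tau_h\psi - \psi)/|h|$ and invoking the coercivity $\mathfrak{q}(u,u) + C\|u\|^2 \geq c\|u\|_{H^{1/2}}^2$ implied by Assumption~\ref{as:form}(i) should yield a uniform bound $\|\chi D_h\psi\|_{L^2} \leq C$. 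Nirenberg's lemma then gives $\chi\,\partial_{x_P,i}\psi \in L^2$ for each $i \in \{1,2,3\}$. The induction closes by iterating on the analogous equation for $\partial_{x_P}^{\beta}\psi$ at the next level, losing a small margin in $\epsilon$ at each step (absorbed at the end by letting the slack go to zero).

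The main obstacle will be the nonlocality of $\mathbf{T} = \sum_j \sqrt{-\Delta_j + m^2}$: a cutoff $\chi$ does not commute with $\mathbf{T}$, and the operator domain of $\mathbf{H}$ is only known to lie between $H^{1}$ and $H^{1/2}$, so one cannot simply take a derivative of $\mathbf{H}\psi = E\psi$ as an $L^2$-identity. I would address this by working at the level of the quadratic form $\mathfrak{q}$ rather than of the operator (so that $[T(p_j),\chi]$ enters only as a pseudodifferential operator of order $0$, hence bounded on $L^2$), and by exploiting the standard smoothing property that for $\chi_1,\chi_2 \in C^\infty$ with $\supp \chi_1 \cap \supp \chi_2 = \emptyset$ the composition $\chi_1 \sqrt{-\Delta+m^2}\,\chi_2$ has a smooth integral kernel. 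These ingredients let one bound the nonlocal tail contributions from the cutoffs by the $L^2$-norm of $\psi$ (and of its previously-controlled $x_P$-derivatives at each induction step), thereby closing the estimates.
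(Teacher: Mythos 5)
Your approach is a genuinely different route from the paper's, and it is worth spelling out how the two compare — and where your argument, as sketched, is thinner than it looks.

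The paper never estimates difference quotients against the coercivity of $\mathfrak{q}$. Instead it constructs an auxiliary operator
$\widetilde{\mathbf{H}} = \mathbf{H}_Q\otimes 1 + 1\otimes\mathbf{H}_P + I_P$,
where the singular pieces that ``see'' $x_P$ (your $\mathbf{V}_{\parallel}$) are replaced by $\mathcal{B}^{\infty}$ modifications $\widehat{V}_j,\widehat{W}_{j,k}$ that agree with the originals outside $B(0,\epsilon/2)$. The decisive point is that $\mathbf{H}_P = \sum_{j\in P}T(p_j)+\sum_{j,k\in P}W_{j,k}$ has operator domain exactly $H^1(\R^{3|P|})$ — this uses the positivity of $W_{j,k}$ and the Davies–Faris theorem (Lemma~\ref{lem:davies_faris}) — so
$\mathcal{D}(\widetilde{\mathbf{H}}) \subset L^2(\R^{3|Q|})\otimes H^1(\R^{3|P|})$.
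Once one shows $\partial_{x_P}^{\gamma}(\varphi_{P,\epsilon}\psi)\in\mathcal{D}(\widetilde{\mathbf{H}})$, a full $x_P$-derivative in $L^2$ comes \emph{for free from the domain}, with no a priori estimate and no Nirenberg lemma. The induction step then only needs to check that $\widetilde{\mathbf{H}}(\partial_{\mathbf e_P}u_\gamma)\in L^2$, which is done term by term using Lemma~\ref{lem:PSDO}. The translation semigroup $\tau_{t\mathbf e_P}$ does appear in the paper, but only to show that $\partial_{\mathbf e_P}u_\gamma\in\mathcal{D}(\mathbf{H}^*)=\mathcal{D}(\mathbf H)$, not to produce the $L^2$ bound on the derivative.

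Your Nirenberg/coercivity route is not obviously wrong, but the sentence ``invoking the coercivity \dots\ should yield a uniform bound $\|\chi D_h\psi\|_{L^2}\leq C$'' is precisely where all the work is. Pairing the differenced equation against $\chi D_h\psi$ in the form sense produces, on top of the term you discuss, (a) the $+C\|\chi D_h\psi\|_{L^2}^2$ coming from the lower bound in the coercivity, which is of the same size as the quantity you are trying to bound; and (b) the nonlocal commutator $\langle \chi D_h\psi,\,[\mathbf{T},\chi]D_h\psi\rangle$, where $D_h\psi$ has full support and $[\mathbf{T},\chi]$ is only order $0$. Both can be absorbed, but only by exploiting that $\psi\in H^{1/2}(\R^{3N})$ gives the \emph{uniform} bound $\|D_h\psi\|_{H^{-1/2}}\lesssim\|\psi\|_{H^{1/2}}$, together with the interpolation $\|u\|_{L^2}^2\leq \delta\|u\|_{H^{1/2}}^2+C_\delta\|u\|_{H^{-1/2}}^2$ and the $H^{1/2}$--$H^{-1/2}$ duality estimate $|\langle\chi D_h\psi,[\mathbf{T},\chi]D_h\psi\rangle|\lesssim\|\chi D_h\psi\|_{H^{1/2}}\|\psi\|_{H^{1/2}}$. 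None of this is in your sketch, and without it the inequality you obtain has $\|\chi D_h\psi\|_{L^2}$ on both sides and does not close. Moreover, to iterate you must re-establish at each stage that the truncated lower-order derivative lies in $\mathcal{D}(\mathbf{H})$ (so that the next ``analogous equation'' makes sense as a form identity against $v\in\mathcal{D}(\mathbf{H})$), which is again not automatic since $\mathcal{D}(\mathbf{H})$ is only sandwiched between $H^1$ and $H^{1/2}$. The paper's tensor-product construction buys you exactly this: membership in $\mathcal{D}(\widetilde{\mathbf{H}})$ both supplies the extra derivative and is the invariant carried through the induction, with the coercivity game avoided entirely.
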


\begin{proof}
Since the proof is somewhat technical we split it in a
number of steps in order to make the structure more
transparent. We first prove a lemma on localization.
\begin{lemma}\label{lem:localize}
Let \(\varphi\in\mathcal{B}^{\infty}(\R^{3N})\) and \(u\in
\mathcal{D}(\mathbf{H})\). Then \(\varphi
u\in\mathcal{D}(\mathbf{H})\) and
\begin{align}\label{eq:com}
  \mathbf{H}(\varphi u)=\varphi(\mathbf{H}u)+Bu\,,
\end{align}
where \(B\in\mathcal{B}(L^2(\R^{3N}))\) is the commutator
\([{\mathbf{T}},\varphi]\). 
\end{lemma}
\begin{proof}
Notice first that $\varphi u \in {\mathcal Q}({\mathbf H})$ since
$u\in{\mathcal D}({\mathbf H})\subset{\mathcal Q}({\mathbf
  H})=H^{1/2}(\R^{3N})$ and multiplication by $\varphi$ maps
$H^s(\R^{3N})$ into itself for all $s\in{\mathbb R}$. Let
$v\in{\mathcal D}({\mathbf H})\subset{\mathcal Q}({\mathbf H})$,
then also \(\overline{\varphi}v\in{\mathcal Q}(H)\), and, since $u\in
{\mathcal D}({\mathbf H})$ and \(\mathfrak{q}\) is symmetric (see
\eqref{eq:formDef}), 
\begin{align}
  \label{eq:1}
  \mathfrak{q}(\varphi u, v)= 
  \langle\varphi u,{\mathbf H}v\rangle\ ,\quad
  \mathfrak{q}(u,\overline{\varphi}v)=\langle{\mathbf
    H}u,\overline{\varphi}v\rangle\,. 
\end{align}
Now, we can calculate on a form core ($C^{\infty}_0(\R^{3N})$) to obtain
\begin{align}\label{eq:2}
  \mathfrak{q}(\varphi u, v) = \mathfrak{q}(u, {\overline{\varphi}} v)
  + \langle B u, v\rangle\,, 
\end{align}
where $B$ is the operator $[{\mathbf T}, \varphi]$, which  is bounded on
$L^2({\mathbb R}^{3N})$ since $\varphi \in {\mathcal B}^{\infty}({\mathbb
  R}^{3N})$ (see Lemma~\ref{lem:PSDO} below). It follows
from \eqref{eq:1} and \eqref{eq:2} that 
\begin{align}
  \label{eq:commute}
  \langle\varphi u,{\mathbf H}v\rangle
  =\langle\varphi{\mathbf H}u + B u ,v\rangle
  \ \text{ for all }\ 
  v\in \mathcal{D}({\mathbf H})\,.
\end{align}
Since $\varphi {\mathbf H} u + B u \in L^2(\R^{3N})$ and
\(\mathcal{D}({\mathbf H})\) is dense in \(L^2(\R^{3N})\) we deduce from
\eqref{eq:commute} that $\varphi u\in{\mathcal D}({\mathbf
  H^*})={\mathcal D}({\mathbf H})$ and that \eqref{eq:com} holds. 
This proves the lemma.
\end{proof}

\noindent{\bf An auxiliary operator.}
We introduce the following two operators:
\begin{align}\label{def:H-Q}
  {\mathbf H}_Q & = \sum_{j \in Q} \big( T(p_j) + V_j(x_j) \big)
  + \sum_{j,k \in Q,j<k} W_{j,k}(x_j-x_k)\,,
\end{align} 
on $\otimes_{j\in Q} L^2({\mathbb R}^3)$, 
and
\begin{align}\label{def:H-P}
  {\mathbf H}_P & = \sum_{j \in P} T(p_j) + \sum_{j,k \in P,
  j<k} W_{j,k}(x_j-x_k)\,,
\end{align}
on $\otimes_{j\in P} L^2({\mathbb R}^3)$.

By Assumption~\ref{as:form} (notice that the $W_{j,k}$ are
non-negative, and that \(T(p_j)-|p_j|\) 
is a bounded operator on
\(L^2(\R^3)\)) the quadratic form 
defined by ${\mathbf H}_Q$ is closed and bounded from below on
\(H^{1/2}(\R^{3|Q|})\). The operator ${\mathbf H}_Q$ is
then defined as the (unique) self-adjoint operator associated to this
form; see \cite[Theorem~VIII.15]{R&S1}.

It follows  from Lemma~\ref{lem:davies_faris} in
Appendix~\ref{app:A} that ${\mathbf H}_P$ is self-adjoint with domain
$H^{1}({\mathbb R}^{3|P|})$. We here used Assumption~\ref{as:form}
(ii) and that \(T(p)\) (and therefore, \(\sum_{j \in P} T(p_j)\)), as
mentioned earlier in this section, is the generator of a positivity
preserving semigroup. 

Define furthermore
$$
  \widehat{{\mathbf H}} 
  = {\mathbf H}_Q \otimes 1 + 1 \otimes {\mathbf H}_P
$$
on
$$
  L^2({\mathbb R}^{3N}) \simeq
  \left( \otimes_{j\in Q} L^2({\mathbb R}^3) \right) \otimes
  \left( \otimes_{j\in P} L^2({\mathbb R}^3)\right)\,.
$$
Since ${\mathbf H}_Q$ and ${\mathbf H}_P$ are bounded below, it
follows from results on tensor products
  \cite[p.\ 86]{Monvel1} 
that
$\widehat{{\mathbf H}}$ is self-adjoint with domain 
\begin{align}
\label{eq:DomHat}
  {\mathcal  D}(\widehat{{\mathbf H}})
  &= \big[{\mathcal D}({\mathbf H}_Q) \otimes L^{2}({\mathbb
    R}^{3|P|})\big] 
  \cap
  \big[L^{2}({\mathbb R}^{3|Q|})\otimes {\mathcal D}({\mathbf
    H}_P)\big] \nonumber\\ 
  &\subseteq  
  L^{2}({\mathbb R}^{3|Q|})\otimes {\mathcal D}({\mathbf H}_P) 
 =  L^{2}({\mathbb R}^{3|Q|}) \otimes H^{1}({\mathbb R}^{3|P|})\,.
\end{align}
Choose $\widehat{V}_j \in {\mathcal B}^{\infty}({\mathbb R}^3)$ for $j
\in P$ and $\widehat{W}_{j,k} \in {\mathcal B}^{\infty}({\mathbb
  R}^3)$ for $j \in 
P, k
\in Q$ (and $k \in P, j \in Q$) satisfying
\begin{align*}
  \widehat{V}_j = V_j \text{ on } {\mathbb R}^3\setminus
  B(0,\epsilon/2) \ \ \text{ and }\ \  
  \widehat{W}_{j,k} = W_{j,k} \text{ on } {\mathbb R}^3\setminus
  B(0,\epsilon/2)\,. 
\end{align*}
This is possible by Assumption~\ref{as:form}.
Define finally
\begin{align}
\label{eq:IP}
  \widetilde{{\mathbf H}} &= \widehat{{\mathbf H}} + I_P\,,\\
  I_P({\mathbf x})  &= \sum_{j\in P}
  \widehat{V}_j(x_j) + \sum_{(j\in P,k\in Q)\cup(j\in Q,k\in P)}
  \!\!\!\!\!\!\!\!\!\!\!\!
  \widehat{W}_{j,k}(x_j-x_k)\,.\nonumber
\end{align}
The operator \(\widetilde{\bf H}\) is self-adjoint, with \( {\mathcal
  D}(\widetilde{{\mathbf H}}) = {\mathcal D}(\widehat{{\mathbf H}})\),
since $\widehat{V}_j, \widehat{W}_{j,k}\in L^{\infty}(\R^3)$. 
We have (in the form sense)
\begin{align}\label{eq:TildeV}
  \widetilde{{\mathbf H}} = {\mathbf T} + \widetilde{{\mathbf V}}
\end{align}
with
\begin{align}\label{def:tildeV}
  \widetilde{\bf V}({\bf x}) = I_P({\bf x})
  + \sum_{j \in Q}V_j(x_j)
  +\!\!\!\!\!
  \sum_{(j,k \in P,
  j<k)\cup(j,k \in Q,j<k)} 
  \!\!\!\!\!\!\!\!\!\!\!\!\!\!\!
  W_{j,k}(x_j-x_k)\,.
\end{align}

Let $\widetilde{\mathfrak{q}}$ be the quadratic form
associated with $\widetilde{{\mathbf H}}$. An approximation
argument, using that
$C^{\infty}_0(\R^{3N})$ is a form core for both \(\mathfrak{q}\) and
\(\widetilde{\mathfrak{q}}\), gives that  
for $u,v \in H^{1/2}({\mathbb R}^{3N})$ with
$\supp u \subset U_P(\epsilon/2)$,
\begin{align}\label{eq:formsEqual}
  \mathfrak{q}(u,v) = 
  \widetilde{\mathfrak{q}}(u,v)\,. 
\end{align}

\ %

\noindent{\bf The parallel differentiation.}
Let $f_1,f_2\in C^{\infty}({\mathbb R})$ be a partition of unity on
${\mathbb R}$ satisfying that $f_1$ is non-increasing and $f_1(t)=1$
for $t\leq 5/4$, $f_1(t)=0$ for $t\geq2$, $f_1+f_2=1$. 

For $\epsilon>0$ and $P \subset \{1,\ldots,N\}$, $P\neq \emptyset$ define 
\begin{align}
  \varphi_{P,\epsilon}({\bf x}) := \prod_{j\in P} f_2(2|x_j|/\epsilon)
  \prod_{j\in P, k\in Q} f_2(2|x_j-x_k|/\epsilon)\,. 
\end{align}
Then $\varphi_{P,\epsilon} \in {\mathcal B}^{\infty}({\mathbb
  R}^{3N})$ and $\supp \varphi_{P,\epsilon} \subset U_P(\epsilon/2)$. 

We will prove the following lemma, by induction in
\(k\in\N\cup\{0\}\). 
Notice that part (1) in the lemma implies  that
$\partial^{\gamma}_{x_P} \psi \in L^2(U_P(\epsilon))$. Therefore,
Proposition~\ref{lem:diff_parallel} clearly follows once we have
proved 
Lemma~\ref{lem:S(k)}. 
\begin{lemma}\label{lem:S(k)}
For all \(k\in\N\cup\{0\}\) the following holds:\\
For all $\epsilon>0$, all $P\subset\{1,\ldots,N\}$ with $P \neq
\emptyset$, and all $\gamma \in  
{\mathbb N}^3$ with $|\gamma| \leq k$:
\begin{enumerate}
\item $\partial_{x_P}^{\gamma} (\varphi_{P,\epsilon} \psi) \in {\mathcal
      D}({\mathbf H})\cap {\mathcal D}(\widetilde{{\mathbf H}})$.
\item If $\gamma = \gamma_1 + \cdots + \gamma_k$, with $|\gamma_j|=1$
      for all $j$, then 
\begin{align}
\label{eq:DiffUnder}
  {\ }\qquad{\mathbf H}(\partial_{x_P}^{\gamma} (\varphi_{P,\epsilon} \psi)) &=
  E \partial_{x_P}^{\gamma} (\varphi_{P,\epsilon} \psi) +
  \partial_{x_P}^{\gamma}  [{\mathbf T},\varphi_{P,\epsilon}] \psi
  \\
  &\quad-\sum_{j=1}^k 
  \partial_{x_P}^{\gamma_1+\cdots+\gamma_{j-1}}
  \big\{(\partial_{x_P}^{\gamma_j} I_P) 
  \partial_{x_P}^{\gamma_{j+1}+ \cdots + \gamma_k}
  (\varphi_{P,\epsilon} \psi)\big\}\,.\nonumber
\end{align}
\end{enumerate}
\end{lemma}
\begin{pf}
We proceed by induction.

It follows from Lemma~\ref{lem:localize} that the statement is correct
for $k=0$ (in which case 
\eqref{eq:DiffUnder} reduces to \eqref{eq:com}, when using that
\({\mathbf H}\psi=E\psi\)). 

Suppose that the statement is true for some $k\geq0$. Let $\gamma \in
{\mathbb N}^3$ with $|\gamma| = k$, and write
$u_{\gamma}=\partial^{\gamma}_{x_P} (\varphi_{P,\epsilon} \psi)$.

Let ${\bf e}_P$ be any of the three unit vectors
in
${\mathbb R}^{3N}$ which define the directions of $x_P$. More
precisely, introduce the canonical
basis for ${\mathbb R}^{3N}$, $\{{\bf e}_{j}^{k}\}$ with $j\in \{1, \ldots,
N\}$, $k\in \{1,2,3\}$. Then the vector ${\bf e}_P$ is one of the three
possibilities 
\begin{align}\label{def:e's}
  {\bf e}_P^k := \frac{1}{\sqrt{|P|}} \sum_{j\in P} {\bf e}_j^k\ , \ k 
  \in \{1,2,3\}\,. 
\end{align}
Let  \(\partial_{{\bf e}_P}={\bf e}_P\cdot\nabla\) be the directional 
derivative in the direction \({\bf e}_P\), and define the self-adjoint
operator ${\bf e}_P\cdot {\bf p} = -i {\bf e}_P\cdot\nabla$ with
domain 
$$
  {\mathcal D}({\bf e}_P\cdot {\bf p}) = \{ f \in L^2(\R^{3N}) \,\big|\,
  \partial_{{\bf e}_P}f \in L^2(\R^{3N})\}\,. 
$$
Let furthermore, for 
$t\in \R$, \(\tau_{t{\bf e}_P}\) be the translation operator
$(\tau_{t{\bf e}_P}f)({\bf x}) = f({\bf x}+t{\bf e}_P)$. 
Clearly $t \mapsto \tau_{t{\bf e}_P}$ defines a strongly continuous
semigroup with generator ${\bf e}_P\cdot {\bf p}$.

Notice that for $t$ sufficiently small, $\supp \tau_{t{\bf
    e}_P}u_{\gamma} \subset U_P(\epsilon/2)$.
Since $u_{\gamma} \in {\mathcal D}(\widetilde{{\mathbf H}}) \subset
L^{2}({\mathbb R}^{3|Q|}) \otimes H^{1}({\mathbb R}^{3|P|})$ by the 
induction hypothesis, we know that  
$$
  \partial_{{\bf e}_P} u_{\gamma} \in L^2(\R^{3N})\,,
$$
so $u_{\gamma} \in {\mathcal D}({\bf e}_P\cdot {\bf p})$ and
\begin{align}\label{diff-quotient}
  \lim_{t\to0}\frac{1}{t}(\tau_{t{\bf e}_P}u_{\gamma}-u_{\gamma})
  =\partial_{{\bf e}_P} u_{\gamma}\,,
\end{align}
in \(L^2(\R^{3N})\).

Let $v \in {\mathcal
D}({\mathbf H})$ and consider $\langle {\mathbf H}v, \partial_{{\bf
  e}_P}u_{\gamma} \rangle$. Using \eqref{diff-quotient} and
\eqref{eq:formsEqual}, we get 
\begin{align}\label{eq:something}
  \langle {\mathbf H}v, \partial_{{\bf e}_P} u_{\gamma} \rangle &=
  \lim_{t\rightarrow 0}t^{-1}\langle {\mathbf H} v,
  \tau_{t{\bf e}_P}u_{\gamma}-u_{\gamma}
  \rangle = \lim_{t\rightarrow 0} t^{-1} \mathfrak{q}(v,\tau_{t{\bf
      e}_P}u_{\gamma}-u_{\gamma}) 
  \nonumber\\ 
  &=
  \lim_{t\rightarrow 0} t^{-1}
  \widetilde{\mathfrak{q}}(v,\tau_{t{\bf e}_P}u_{\gamma}-u_{\gamma})\,. 
\end{align}
Since the translation \(\tau_{t{\bf e}_P}\) commutes with
$\widehat{{\mathbf H}}={\mathbf H}_P + {\mathbf H}_Q$ (see
\eqref{def:H-Q} and \eqref{def:H-P}), we get that, with $I_P$ from
\eqref{eq:IP}, 
\begin{align*}
  \widetilde{{\mathbf H}} \tau_{t{\bf e}_P} = \tau_{t{\bf e}_P} 
  \widetilde{{\mathbf H}} + [I_P,\tau_{t{\bf e}_P}]\,.
\end{align*}
Thus, using \eqref{eq:formsEqual}
\begin{align}\label{eq:anything}\nonumber
  \langle {\mathbf H}v, \partial_{{\bf e}_P} u_{\gamma} \rangle &=
  \lim_{t\rightarrow 0} t^{-1}
  \widetilde{\mathfrak{q}}(v,\tau_{t{\bf e}_P}u_{\gamma}-u_{\gamma}) 
  \\&
  =\lim_{t\to0} t^{-1}\langle v,
  \tau_{t{\bf e}_P}(\widetilde{{\mathbf H}}u_{\gamma})
  -\widetilde{{\mathbf H}}u_{\gamma}\rangle - 
  \langle v ,(\partial_{{\bf e}_P} I_P) u_{\gamma}\rangle\,. 
\end{align}
To prove that \(\partial_{{\bf e}_P} u_{\gamma}\in\mathcal{D}({\bf
  H}^*)=\mathcal{D}({\bf H})\) from this, it remains to 
show that when applying \(\partial_{{\bf e}_P}\) to \(\widetilde{{\mathbf H}}
  u_{\gamma}\) we obtain a function belonging to \(L^2(\R^{3N})\).
Then, from \eqref{eq:anything}, also 
$$
  {\bf H}(\partial_{{\bf e}_P} u_{\gamma})
  ={\bf H^*}(\partial_{{\bf e}_P} u_{\gamma})
  =\partial_{{\bf e}_P}(\widetilde{{\mathbf H}}
  u_{\gamma})-(\partial_{{\bf e}_P} I_P) u_{\gamma}\,. 
$$

By \eqref{eq:DiffUnder}, localization and (1) from the induction
hypothesis, we find 
\begin{align}\label{eq:split1}
  \widetilde{{\mathbf H}} u_{\gamma} = {\mathbf H} u_{\gamma} &=
  E \partial_{x_P}^{\gamma} (\varphi_{P,\epsilon} \psi) +
  \partial_{x_P}^{\gamma}  [{\mathbf T},\varphi_{P,\epsilon}] \psi
  \\\nonumber 
  &\quad
  - \sum_{j=1}^k 
  \partial_{x_P}^{\gamma_1+\cdots+\gamma_{j-1}}
  \big\{(\partial_{x_P}^{\gamma_j} I_P) 
  \partial_{x_P}^{\gamma_{j+1}+ \cdots + \gamma_k}
  (\varphi_{P,\epsilon} \psi)\big\}\,.
\end{align}
We will show that when applying \(\partial_{{\bf e}_P}\) to each term
on the right side of \eqref{eq:split1} 
we obtain a function belonging to \(L^2(\R^{3N})\). 

For the first term, since $\partial_{x_P}^{\gamma}
(\varphi_{P,\epsilon} \psi) \in {\mathcal D}(\widetilde{{\mathbf H}})
\subset L^{2}({\mathbb R}^{3|Q|}) \otimes H^{1}({\mathbb R}^{3|P|})$
by \eqref{eq:DomHat} and the induction hypothesis, we know that
\begin{align}
\label{eq:EtLed}
  \partial_{{\bf e}_P} \partial_{x_P}^{\gamma} (\varphi_{P,\epsilon}
  \psi) \in L^2({\mathbb R}^{3N})\,. 
\end{align}

For the third term, the function $I_P$ from \eqref{eq:IP}
satisfies $I_P \in {\mathcal B}^{\infty}({\mathbb R}^{3N})$, and, as
just shown, $\partial_{x_P}^{\alpha} (\varphi_{P,\epsilon} \psi) \in
L^2({\mathbb R}^{3N})$ for all $|\alpha|\leq k+1$, so, by Leibniz'
rule, 
\begin{align}\label{eq:ToLed}
  \partial_{{\bf e}_P} \Big(\sum_{j=1}^k 
  \partial_{x_P}^{\gamma_1+\cdots+\gamma_{j-1}}
  \big\{(\partial_{x_P}^{\gamma_j} I_P) 
  \partial_{x_P}^{\gamma_{j+1}+ \cdots + \gamma_k}
  (\varphi_{P,\epsilon} \psi)\big\}\Big)\in L^2({\mathbb R}^{3N})\,. 
\end{align}

Finally, we consider the commutator term $\partial_{x_P}^{\gamma}
[{\mathbf T},\varphi_{P,\epsilon}] \psi$ in \eqref{eq:split1}. 
Define $\varphi_1=\varphi_{P,\epsilon/4}$, $\varphi_2 = 1-\varphi_1$.
Notice that, by the definition of $f_1,f_2$,
\begin{align}\label{eq:SuppCond}
  f_1(8t/\epsilon) f_2(2t/\epsilon)=0\,.
\end{align}
By using that $f_1 + f_2 = 1$ we find 
\begin{align}
  \varphi_2 = \sum_{(\{s_j\},\{s_{j,k}\})} \prod_{j\in P}
  f_{s_j}(8|x_j|/\epsilon) \prod_{j\in P, k\in Q}
  f_{s_{j,k}}(8|x_j-x_k|/\epsilon)\,, 
\end{align}
where the sum is over all tuples $(\{s_j\},\{s_{j,k}\}) \in
\{1,2\}^{|P|+|P|\cdot|Q|}$ with at least one entry different from
$2$. 
Write  the commutator term $\partial_{x_P}^{\gamma}
[{\mathbf T},\varphi_{P,\epsilon}] \psi$ as 
\begin{align}\label{eq:furtherLoc}
  \partial_{x_P}^{\gamma}  [{\mathbf T},\varphi_{P,\epsilon}] \psi =
  \partial_{x_P}^{\gamma}  [{\mathbf T},\varphi_{P,\epsilon}] (\varphi_1\psi) + 
  \partial_{x_P}^{\gamma}  [{\mathbf T},\varphi_{P,\epsilon}] (\varphi_2\psi)\,,
\end{align}
The term with $\varphi_1$ we write, using Leibniz' rule, as
\begin{align*}
  \partial_{x_P}^{\gamma}  [{\mathbf T},\varphi_{P,\epsilon}] (\varphi_1\psi)
  =
  \sum_{\beta\leq \gamma}  \begin{pmatrix} \gamma \\ \beta \end{pmatrix}
  [{\mathbf T},\partial_{x_P}^{\beta} \varphi_{P,\epsilon}] 
  \partial_{x_P}^{\gamma-\beta} (\varphi_1\psi)\,,
\end{align*}
so,
\begin{align*}
  \partial_{{\bf e}_P} \partial_{x_P}^{\gamma}  [{\mathbf
    T},\varphi_{P,\epsilon}] (\varphi_1\psi) 
  &=
  \sum_{\beta\leq \gamma}  \begin{pmatrix} \gamma \\ \beta \end{pmatrix}
  [{\mathbf T},\partial_{{\bf e}_P} \partial_{x_P}^{\beta} \varphi_{P,\epsilon}] 
  \partial_{x_P}^{\gamma-\beta} (\varphi_1\psi)\\
  &\quad +
  \sum_{\beta\leq \gamma} 
  \begin{pmatrix} \gamma \\ \beta \end{pmatrix}
  [{\mathbf T},\partial_{x_P}^{\beta} \varphi_{P,\epsilon}] 
  \partial_{{\bf e}_P}\partial_{x_P}^{\gamma-\beta} (\varphi_1\psi)\,.
\end{align*}
By Lemma~\ref{lem:PSDO} and the induction hypothesis, we therefore see
that 
\begin{align}\label{eq:TreLed2}
  \partial_{{\bf e}_P} \partial_{x_P}^{\gamma}  [{\mathbf T},\varphi_{P,\epsilon}]
  (\varphi_1\psi) \in L^2({\mathbb R}^{3N})\,. 
\end{align}

Now we consider the term with $\varphi_2$ in \eqref{eq:furtherLoc}. 
We will prove that also
\begin{align}\label{eq:TreLed1bis}
  \partial_{{\bf e}_P} \partial_{x_P}^{\gamma} & \big[{\bf
    T},\varphi_{P,\epsilon}\big] (\varphi_2\psi) 
  \in L^2({\mathbb R}^{3N})\,. 
\end{align}
Since ${\bf T}$ is a finite sum and
\(\varphi_{P,\epsilon}\varphi_2=0\) it suffices, up to renumbering of
the terms, to prove that 
\begin{align}\label{eq:TreLed1}
  -\partial_{{\bf e}_P} \partial_{x_P}^{\gamma} &
  \big[\sqrt{p_1^2+m},\varphi_{P,\epsilon}\big] 
  (\varphi_2\psi) \nonumber\\
  &= \partial_{{\bf e}_P} \partial_{x_P}^{\gamma}
  \varphi_{P,\epsilon} \sqrt{p_1^2+m}\, (\varphi_2\psi) 
  \in L^2({\mathbb R}^{3N})\,.
\end{align}
\begin{proof}[Proof of \eqref{eq:TreLed1}]~\\
{\bf Case 1.} $1\in P$.\\
The case $P=\{1\}$ being immediate by Lemma~\ref{lem:PSDO}, we will
assume that $P_1 \neq \emptyset$, where $P_1 :=P \setminus \{1\}$. 

Since $\sqrt{p_1^2+m}$ commutes with multiplication operators in other
variables, and using the support condition \eqref{eq:SuppCond}, we
find 
\begin{align*}
  \varphi_{P,\epsilon} \sqrt{p_1^2+m}\,\varphi_2
  &=
  \prod_{j\in P_1} f_2(2|x_j|/\epsilon) \prod_{j\in P_1, k\in Q}
  f_2(2|x_j-x_k|/\epsilon) \nonumber\\ 
  &\times\Big\{ f_2(2|x_1|/\epsilon) \prod_{k\in Q}
  f_2(2|x_1-x_k|/\epsilon)\sqrt{p_1^2+m} \,f \Big\}\,, 
\end{align*}
with
\begin{align}
  f:=\sum_{\sigma} f_{\sigma_1}(8|x_1|/\epsilon) \prod_{k\in Q}
  f_{\sigma_k}(8|x_1-x_k|/\epsilon)\,, 
\end{align}
where the sum is over all $\sigma \in \{1,2\}^{1 + |Q|}$ with $\sigma
\neq (2,\ldots,2)$. Since at least one factor for each summand has to
be $f_1$ we find 
$$
  \supp f \subset \big\{ {\bf x} \,\big |\, \min\big( |x_1|,
  \min_{k\in Q} |x_1-x_k|\big) \leq \epsilon/4\big\}\,. 
$$
Thus, by the triangle inequality
$$
  \supp \Big(\prod_{j\in P_1} f_2(2|x_j|/\epsilon) \prod_{j\in P_1,
    k\in Q} f_2(2|x_j-x_k|/\epsilon) f\Big)\subset
  U_{P_1}(\epsilon/4)\,. 
$$
Since $\varphi_{P_1, \epsilon/4}=1$ on $U_{P_1}(\epsilon/4)$ we get the identity
\begin{align}\label{eq:ExtraLoc}
  \varphi_{P,\epsilon} \sqrt{p_1^2+m}\,\varphi_2
  =
  \big(\varphi_{P,\epsilon} \sqrt{p_1^2+m}\,\varphi_2\big) \varphi_{P_1,
    \epsilon/4}\,. 
\end{align}
By the induction hypothesis
\begin{align}\label{eq:GettingThere}
  \partial_{x_P}^{\gamma'}\big(\varphi_{P_1, \epsilon/4} \psi\big)
  \in L^2({\mathbb R}^{3N})\,,
\end{align}
for all $|\gamma'|\leq n$. Furthermore, since \(\supp\varphi_{P,
  \epsilon}\cap\,\supp\varphi_2=\emptyset\),  
Lemma~\ref{lem:PSDO} yields that 
$$
  (\partial_{x_P}^{\alpha}\varphi_{P,\epsilon})
  \sqrt{p_1^2+m}\,(\partial_{x_P}^{\beta}\varphi_2) (1+p_1^2)^M 
$$
is a bounded operator on $L^2({\mathbb R}^{3N})$ for all $\alpha,
\beta, M$. 

By Leibniz rule and \eqref{eq:ExtraLoc},
\begin{align}\label{eq:PatchLeibniz}
  &\partial_{x_P}^{\gamma'} \big( \varphi_{P,\epsilon}
  \sqrt{p_1^2+m}\,\varphi_2 \big)\nonumber\\ 
  &=
  \sum_{\alpha_1+\alpha_2+\alpha_3 = \gamma'}
  c_{\alpha_1,\alpha_2,\alpha_3} \Big\{
  \partial_{x_P}^{\alpha_1}\varphi_{P,\epsilon})
  \sqrt{p_1^2+m}\,(\partial_{x_P}^{\alpha_2}\varphi_2)
  (1+p_1^2)^M\Big\}\nonumber\\ 
  &\quad\quad\quad\quad\quad\quad\quad\quad\quad\quad\quad\quad 
  \times \big\{ (1+p_1^2)^{-M} \partial_{x_P}^{\alpha_3}\big(
  \varphi_{P_1, \epsilon/4} \psi\big)\big\}\,, 
\end{align}
for some constants $c_{\alpha_1,\alpha_2,\alpha_3}$.

By definition, $\partial_{x_P}^{\alpha}=\sum_{\beta\leq \alpha}
c_{\alpha,\beta} \partial_1^{\beta}\partial_{x_{P_1}}^{\alpha-\beta}$
for some constants $c_{\alpha,\beta}$. 
So using \eqref{eq:GettingThere} and choosing
$\partial_{x_P}^{\gamma'} = \partial_{{\bf e}_P}
\partial_{x_P}^{\gamma}$ and $M\geq |\gamma|+1$ in
\eqref{eq:PatchLeibniz}, we see that 
\begin{align*}
  \partial_{{\bf e}_P} \partial_{x_P}^{\gamma}\big(
  \varphi_{P,\epsilon} \sqrt{p_1^2+m}\,\varphi_2 \big) 
  \in L^2({\mathbb R}^{3N})\,.
\end{align*}
This finishes the proof of \eqref{eq:TreLed1} in the case $1 \in P$.

\noindent{\bf Case 2.} $1\notin P$.\\
This case is similar but simpler than Case 1. In this case we define
$P_1 = P$. Arguing as previously we realize that the identity
\eqref{eq:ExtraLoc} remains valid. Also \eqref{eq:GettingThere}
follows from the induction hypothesis. Since $P=P_1$, we can in this
case choose $M=0$ in \eqref{eq:PatchLeibniz} and get the desired
result. This finishes the proof of \eqref{eq:TreLed1} in the case
$1\notin P$ and combining with Case 1, we get the general result. 
\end{proof}

Combining \eqref{eq:EtLed}, \eqref{eq:ToLed}, \eqref{eq:TreLed1bis}, and
\eqref{eq:TreLed2}, we get that 
\begin{align}\label{eq:derOK}
  \partial_{{\bf e}_P}(\widetilde{{\mathbf H}} u_{\gamma}) \in
  L^2({\mathbb R}^{3N})\,. 
\end{align}
So we see from
\eqref{eq:anything} that for all \(v\in{\mathcal D}({\bf H})\),  
\begin{align}\label{eq:InDomain}
  \langle {\mathbf H}v, \partial_{{\bf e}_P} u_{\gamma} \rangle=
  \langle v, \partial_{{\bf e}_P}(\widetilde{{\mathbf H}} u_{\gamma}) -
  (\partial_{{\bf e}_P} I_P) u_{\gamma}\rangle\,.
\end{align}
From \eqref{eq:derOK}, \eqref{eq:InDomain}, and \eqref{eq:formsEqual} we
conclude that 
\begin{align}\label{eq:Conc1}
  \partial_{{\bf e}_P} u_{\gamma} \in {\mathcal D}({\mathbf
    H^*})\cap{\mathcal D}({\widetilde{{\mathbf H}}}^*)
  ={\mathcal D}({\mathbf H})
  \cap
  {\mathcal D}(\widetilde{{\mathbf H}})\,, 
\end{align}
and
\begin{align}\label{eq:Conc2}
  {\mathbf H}(\partial_{{\bf e}_P} u_{\gamma}) = \partial_{{\bf e}_P}(
  {\mathbf H} u_{\gamma}) - 
  (\partial_{{\bf e}_P} I_P) u_{\gamma}\,.
\end{align}
The equations \eqref{eq:Conc1} and \eqref{eq:Conc2} combine to give
the statement in Lemma~\ref{lem:S(k)} for \(k+1\). 

This finishes the induction step, and by
induction the statement in Lemma~\ref{lem:S(k)}
therefore holds for all \(k\in\mathbb{N}\cup\{0\}\).
\end{pf}
As mentioned above, this finishes the
proof of Proposition~\ref{lem:diff_parallel}. 
\end{proof}

\appendix

\section{Auxiliary results from operator theory}
\label{app:A}
In the proof of Lemma~\ref{lem:diff_parallel} we need the following
consequence of the Davies-Faris Theorem (\cite[Theorem X.31]{R&S2}). 
\begin{lemma}
\label{lem:davies_faris}
Suppose $T\geq 0$ is self-adjoint with domain \(\mathcal{D}(T)\), and
that \(T\) is
the generator of a positivity 
preser\-ving semigroup. Let $V$ be a positive multiplication operator,
which is bounded relative to $T$. Then ${\mathbf H}= T+V$ is
self-adjoint on ${\mathcal D}(T)$.
\end{lemma}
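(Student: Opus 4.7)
The plan is to invoke the Davies--Faris theorem cited in the statement. That theorem is tailored to non-negative perturbations of generators of positivity-preserving semigroups, and it delivers essential self-adjointness of the operator sum $T + V$ on the natural domain $\mathcal{D}(T) \cap \mathcal{D}(V)$. The relative-boundedness hypothesis $\|Vu\| \leq a\|Tu\| + b\|u\|$ on $\mathcal{D}(T)$ immediately forces $\mathcal{D}(T) \subseteq \mathcal{D}(V)$, so that the intersection simplifies to $\mathcal{D}(T)$, and $T + V$ is essentially self-adjoint on $\mathcal{D}(T)$.

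To upgrade essential self-adjointness to genuine self-adjointness on $\mathcal{D}(T)$, I would show that $T + V$ is already closed on $\mathcal{D}(T)$. Given a sequence $u_n \in \mathcal{D}(T)$ with $u_n \to u$ and $(T+V)u_n \to g$ in $L^2$, the triangle inequality together with relative boundedness yields $\|Tu_n\| \leq \|(T+V)u_n\| + \|Vu_n\| \leq \|(T+V)u_n\| + a\|Tu_n\| + b\|u_n\|$. When $a<1$ this gives a uniform bound on $\|Tu_n\|$; weak compactness in $L^2$ and weak closedness of the self-adjoint operator $T$ then place $u$ in $\mathcal{D}(T)$, with $Tu+Vu=g$. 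A closed, essentially self-adjoint operator is self-adjoint, which is the desired conclusion.

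The main obstacle is the case where the $T$-bound of $V$ is not strictly less than one, which is precisely the situation encountered in the paper's application (multiplication by Coulomb-type potentials bounded only from $H^1$ to $L^2$). In that case the graph-norm estimate above breaks down, and I would instead approximate $V$ by the bounded truncations $V_n := \min(V,n)$. Each $T + V_n$ is self-adjoint on $\mathcal{D}(T)$ by Kato--Rellich; monotone convergence of the associated quadratic forms yields strong resolvent convergence of $T + V_n$ to the form sum $T \dot{+} V$, which coincides with the operator constructed by Davies--Faris. Since $Vu \in L^2$ for $u \in \mathcal{D}(T)$ by the $T$-boundedness, dominated convergence gives $(T + V_n)u \to (T+V)u$ in $L^2$, and standard results on limits of self-adjoint operators in the strong resolvent sense then identify the domain of $T \dot{+} V$ as exactly $\mathcal{D}(T)$, with $T \dot{+} V = T+V$ there.
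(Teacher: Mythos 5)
Your proposal has a genuine gap at the decisive step, which is precisely the case $a\geq 1$ (when $a<1$ the Kato--Rellich theorem already gives the result, and Davies--Faris is unnecessary). The truncation $V_n:=\min(V,n)$ and Kato's monotone convergence theorem for forms do give strong resolvent convergence $T+V_n\to T\dot{+}V$, and combined with $(T+V_n)u\to (T+V)u$ for $u\in\mathcal{D}(T)$ one can conclude $\mathcal{D}(T)\subseteq\mathcal{D}(T\dot{+}V)$ with agreement on $\mathcal{D}(T)$. But that is only one inclusion. There is no ``standard result'' asserting that strong resolvent convergence of self-adjoint operators with a common domain forces the limit to have that same domain; in general the limit's domain can be strictly larger. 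To obtain the reverse inclusion you would need $(T+V)|_{\mathcal{D}(T)}$ to be closed, which is exactly the graph-norm estimate $\|Tu\|\lesssim\|(T+V)u\|+\|u\|$ that, as you correctly observe, breaks down when $a\geq 1$. So the final sentence of your proof does not go through, and it is the whole content of the lemma.

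The paper closes this gap by a bootstrap induction that uses the Davies--Faris theorem in a different way than you do. One chooses $g>0$ so small that $gV$ has $T$-bound $a<1$, and proves inductively that $K_n:=T+ngV$ is self-adjoint on $\mathcal{D}(T)$, generates a positivity-preserving semigroup, and that $gV$ keeps $K_n$-bound $\leq a$. Kato--Rellich gives self-adjointness of $K_{n+1}=K_n+gV$ on $\mathcal{D}(K_n)=\mathcal{D}(T)$; the Trotter product formula shows $e^{-tK_{n+1}}$ is still positivity preserving; and the Davies--Faris theorem [R\&S2, Thm.\ X.31] is invoked \emph{to re-establish the small relative bound} $\|gV\varphi\|\leq a\|(K_{n+1}+1)\varphi\|$, restoring the induction hypothesis. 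After finitely many steps (with $ng=1$) one reaches $T+V$ self-adjoint on $\mathcal{D}(T)$. So Davies--Faris is not used as a direct self-adjointness theorem, but as the tool that propagates the $a<1$ bound through the iteration---that propagation is what makes the large-bound case tractable, and it is the idea your proposal is missing.
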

\begin{proof}
Choose $g>0$ such that $gV$ is relatively bounded with
respect to $T$ with bound $a < 1$. We will prove by
induction that $K_n= T+ ngV$ is self-adjoint on ${\mathcal
D}(T)$ for all $n \in {\mathbb N}$. In order to do so, let
us consider the following statement
$S(n)$:\\
\begin{enumerate}
\item $K_n= T+ ngV$ is self-adjoint on ${\mathcal
D}(T)$.
\item
$\| gV \varphi \| \leq a \| (K_n+1) \varphi \| \text{ for all }
\varphi \in {\mathcal D}(T).$
\item $K_n$ is the generator of a positivity preserving
semigroup.
\end{enumerate}
\ %
\\
Note first that $S(0)$ is true by assumption.
%For $n=0$ there is nothing to prove.

Suppose now $S(n)$ holds true for some $n \geq 0$. 
By $S(n)$ point (2), $gV$ is a small operator
perturbation of $K_n$, so $K_{n+1}=K_n+gV$ is (by the
Kato-Rellich Theorem \cite[Theorem X.12]{R&S2}) self-adjoint on
${\mathcal D}(K_n)={\mathcal D}(T)$. Furthermore, using the Trotter
product formula \cite[Theorem X.51]{R&S2} and the induction
hypothesis, it is easy to see that $e^{-t K_{n+1}}$ is positivity
preserving (for \(t>0\)). 
Then, by the Davies-Faris Theorem
\cite[Theorem X.31]{R&S2}, it follows that $gV$
satisfies the bound
$$
  \| gV \varphi \| \leq a \| (K_{n+1}+1) \varphi \| \text{ for all }
  \varphi \in {\mathcal D}(T)\,.
$$
Therefore \(S(n+1)\) holds.
This finishes the proof that \(S(n)\) implies \(S(n+1)\) for any
\(n\ge0\). 

The proof of Lemma~\ref{lem:davies_faris} now follows by induction. 
\end{proof}

We also state the following lemma which is used repeatedly in
Section~\ref{parallel}. The proof is standard and is omitted.
%\marginpar{\tiny 26/11/08 Proof now mentioned 17/11/08 Should we
%  mention `proof' ... ?} 
\begin{lemma}
\label{lem:PSDO}
Let $\chi, \phi \in {\mathcal B}^{\infty}({\mathbb R}^{3N})$ have
disjoint support and let $m\geq 0$. Then $[\sqrt{p_j^2+m}, \varphi ]$
defines a bounded operator on $H^s({\mathbb R}^{3N})$ for all $s \in
{\mathbb R}$ and $(1+p_j^2)^M \chi [\sqrt{p_j^2+m}, \varphi
](1+p_j^2)^M$ is a bounded operator on $L^2({\mathbb R}^{3N})$ for all
$M$. 
\end{lemma}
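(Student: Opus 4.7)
The plan is to treat $T_j := \sqrt{p_j^2+m}$ as a pseudodifferential operator (PsDO) acting only in the $x_j$-variables, with symbol $a_m(\xi_j) = (|\xi_j|^2+m)^{1/2}$. For $m>0$ this symbol lies in the standard class $S^1$ in $\xi_j$, satisfying $|\partial_{\xi_j}^{\alpha} a_m(\xi_j)| \leq C_{\alpha}(1+|\xi_j|)^{1-|\alpha|}$. For $m=0$ the symbol $|\xi_j|$ is non-smooth at the origin; I would split $a_0 = \theta(\xi_j)a_0(\xi_j) + (1-\theta(\xi_j))a_0(\xi_j)$ with $\theta\in C_c^{\infty}(\R^3)$ equal to one near $0$. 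The low-frequency piece gives a convolution operator in $x_j$ whose kernel is a Schwartz function (Fourier transform of a compactly supported bounded function), hence it is automatically bounded on every $H^s$ and its products with powers of $(1+p_j^2)^M$ remain bounded; the high-frequency piece has symbol in $S^1$. Modulo this splitting, both claims reduce to a classical $S^1$-calculus computation.

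\textbf{First claim.} By the standard symbolic calculus, for $a\in S^1$ and $\varphi\in\mathcal{B}^{\infty}(\R^{3N})$ the commutator $[\operatorname{Op}(a),\varphi]$ is a PsDO with principal symbol
\[
  -i\sum_k (\partial_{\xi_{j,k}}a)(\partial_{x_{j,k}}\varphi),
\]
which lies in $S^0$ because every derivative of $\varphi$ is bounded on $\R^{3N}$. The full asymptotic expansion only involves higher $\xi_j$-derivatives of $a$ (each lowering the order by one) multiplied by bounded $x_j$-derivatives of $\varphi$, so $[T_j,\varphi]$ is an $S^0$ operator. The Calder\'on--Vaillancourt-type continuity theorem then yields boundedness on $H^s(\R^{3N})$ for every $s\in\R$.

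\textbf{Second claim.} Since $\chi$ and $\varphi$ have disjoint supports, $\chi\varphi\equiv 0$ and therefore
\[
  \chi[T_j,\varphi] = \chi T_j\varphi.
\]
The disjoint-support hypothesis, as used in Section~\ref{parallel}, provides a positive distance $d>0$ between $\operatorname{supp}\chi$ and $\operatorname{supp}\varphi$, uniform along the $x_j$-fiber. The kernel $K_m(x_j-y_j)$ of $T_j$, viewed as an operator in $x_j\in\R^3$, is a tempered distribution that is $C^{\infty}$ and (together with all derivatives) rapidly decaying away from $x_j=y_j$. Hence the full Schwartz kernel of $\chi T_j\varphi$ on $\R^{3N}\times\R^{3N}$ is
\[
  \delta(x_{\hat j}-y_{\hat j})\,\chi(x)K_m(x_j-y_j)\varphi(y),
\]
and the support condition forces $|x_j-y_j|\geq d$ whenever $\chi(x)\varphi(y)\neq 0$. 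Thus $\chi T_j\varphi$ is a smoothing operator in the $x_j$-direction, continuously mapping $H^{-M}_{x_j}(\R^{3N})$ to $H^{M}_{x_j}(\R^{3N})$ for every $M$; equivalently $(1+p_j^2)^M\chi[T_j,\varphi](1+p_j^2)^M$ is bounded on $L^2(\R^{3N})$. For the full operator bound I would quantify this via the Schur test, using that the rapid decay of $K_m$ and all its derivatives on $\{|x|\geq d\}$ yields $L^1_{x_j-y_j}$ bounds on every kernel derivative needed to absorb the $(1+p_j^2)^M$ factors.

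The main obstacle is the massless case $m=0$, where the symbol is not smooth at $\xi_j=0$; once the low-frequency splitting is handled cleanly and one verifies that the smooth-kernel/Schur argument for the second claim accommodates the slower ($|x|^{-4}$) decay of $K_0$ versus the exponential decay of $K_m$ for $m>0$, the remainder is a direct application of classical PsDO theory.
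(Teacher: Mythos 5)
Your proposal is the standard pseudodifferential argument that the paper alludes to when it declares the proof ``standard and omitted'', and it is essentially sound: symbolic calculus for the commutator (first claim), and the off-diagonal smoothing of the kernel of $\sqrt{p_j^2+m}$ combined with a Schur test (second claim). Two points need tightening. First, your treatment of the low-frequency piece in the massless case contains a false statement: the Fourier transform of $\theta(\xi_j)|\xi_j|$ is \emph{not} a Schwartz function, because $\theta(\xi_j)|\xi_j|$ is compactly supported but not smooth at $\xi_j=0$; its inverse Fourier transform is smooth but decays only like $|x_j|^{-4}$ (with decay improving by one power per derivative). The conclusion you want for this piece is nevertheless true and easier than you make it: since $\theta$ has compact support, $\xi_j^{\alpha}\,\theta(\xi_j)\,|\xi_j|$ is a bounded, compactly supported multiplier for every $\alpha$, so $(1+p_j^2)^{M}\Op\big(\theta(\xi_j)|\xi_j|\big)(1+p_j^2)^{M}$ is bounded and the whole low-frequency contribution can be absorbed directly, without any kernel decay estimate. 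Relatedly, note that $a(\xi_j)=\sqrt{|\xi_j|^2+m}$ is a symbol of order $1$ only in the $\xi_j$-variable, not in the full frequency $\xi\in\R^{3N}$; the commutator symbol is therefore in the class of symbols with all derivatives bounded (an $S^0_{0,0}$-type class) rather than in $S^0_{1,0}(\R^{3N})$, so you should either invoke Calder\'on--Vaillancourt for that class or, more simply, fiber the estimate over $x_{\hat\jmath}$ and use uniformity of the $\mathcal{B}^{\infty}$-bounds on $\varphi(\cdot,x_{\hat\jmath})$. Second, you correctly observe that the kernel argument for $\chi\sqrt{p_j^2+m}\,\varphi$ requires a positive distance between the supports (mere disjointness of two closed unbounded sets does not give this); you import this from the way the lemma is used in Section~\ref{parallel}, which is the right reading of the hypothesis, but it deserves to be stated as an explicit strengthening of ``disjoint support''. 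With these repairs your argument is complete.
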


\begin{acknowledgement}
Parts of this work have been carried out at various
institutions, whose hospitality is gratefully acknowledged:
Mathematisches Forschungsinstitut Oberwolfach 
(SF, T\O S), Erwin Schr\"{o}\-dinger Institute (SF, T\O
S), Universit\'{e} Paris-Sud (T\O S), and the IH\'ES (T\O S).
SF is partially supported by a Skou Grant and a Young Eliteresearcher award from the 
Danish councils for independent research, a grant from the Lundbeck Foundation, and the 
European Research Council under the European Community's Seventh Framework Programme 
(FP7/2007-2013)/ERC grant agreement n$^{\circ}$ 202859. T{\O}S is
partially supported by The 
Danish Natural Science Research
Council, under the grant `Mathematical Physics and Partial Differential
Equations'.
\end{acknowledgement}

\def\cprime{$'$}
\providecommand{\bysame}{\leavevmode\hbox to3em{\hrulefill}\thinspace}
\providecommand{\MR}{\relax\ifhmode\unskip\space\fi MR }
% \MRhref is called by the amsart/book/proc definition of \MR.
\providecommand{\MRhref}[2]{%
  \href{http://www.ams.org/mathscinet-getitem?mr=#1}{#2}
}
\providecommand{\href}[2]{#2}

\end{document}